\documentclass{article}

    \PassOptionsToPackage{numbers, compress}{natbib}

 \usepackage[numbers]{natbib}

\usepackage[utf8]{inputenc} 
\usepackage[T1]{fontenc}    
\usepackage{hyperref}       
\usepackage{url}            
\usepackage{booktabs}       
\usepackage{amsfonts}       
\usepackage{nicefrac}       
\usepackage{microtype}      

\usepackage{amsmath}
\usepackage{amssymb}
\usepackage{mathtools}
\usepackage{amsthm}
\theoremstyle{plain}
\newtheorem{theorem}{Theorem}[section]

\newtheorem{corollary}[theorem]{Corollary}
\theoremstyle{definition}
\newtheorem{definition}[theorem]{Definition}

\theoremstyle{remark}

\usepackage{algorithm}
\usepackage{algpseudocode}
\usepackage{subfiles}
\usepackage{braket}
\usepackage[table,xcdraw]{xcolor}
\usepackage[acronym]{glossaries}
\usepackage{array}
\usepackage{booktabs}
\usepackage{hyperref}
\usepackage{subcaption}

\newtheorem*{remark*}{Remark} 

\usepackage{amsmath}
\usepackage{amssymb}
\usepackage{multirow}
\usepackage{makecell}

\DeclareMathOperator*{\argmin}{arg\,min}

\newcommand{\vt}[1]{
\boldsymbol{#1}
}

\newcommand{\tr}[1]{
  \mathrm{tr}(#1)
}

\newcommand{\cmt}[1]{           
  {\color{red} #1}
}

\title{Zero-shot Quantum Neural Architecture Search}

\author{%
  Tung Dao$^1$, Son N. Tran$^2$, Huynh Thi Thanh Binh$^1$ \\
  $^1$Hanoi University of Science and Technology, Hanoi, Vietnam\\
  $^2$Deakin University, Melbourne, Australia\\
}

\begin{document}

\maketitle

\begin{abstract}
Variational Quantum Algorithms (VQAs) are a leading approach to exploiting near-term quantum hardware, leveraging parameterized quantum circuits and classical optimization to achieve advantage. Despite their promise, the practical deployment of VQAs is challenged by the difficulty of designing quantum circuit architectures that balance expressivity, trainability, and hardware constraints. Existing evolutionary-based quantum neural architecture search methods address these challenges but suffer from high computational costs due to repeated training of candidate circuits. In this work, we identify a setting in which the Gram matrix of the Quantum Neural Tangent Kernel converges. Building on this observation, we design a zero-shot surrogate model to estimate candidate performance without full training, significantly accelerating the architecture search process. Using this surrogate, we propose MZeQAS, a Monte Carlo Tree Search (MCTS)-based Zero-Shot Quantum Neural Architecture Search framework for VQAs. By integrating proxy-based performance estimation with MCTS exploration, MZeQAS efficiently discovers high-performing architectures. Experimental results demonstrate that MZeQAS outperforms existing approaches in terms of both search efficiency and solution quality, providing a scalable and effective framework for advancing VQA deployment on noisy intermediate-scale quantum devices.
\end{abstract}

\section{Introduction}
The emergence of quantum computing has introduced a new computational paradigm that exploits quantum-mechanical effects, motivating new approaches to computation and learning. Motivated by recent progress in machine learning, quantum neural networks (QNNs) have been proposed as a promising framework for applying quantum computing to learning problems. Among existing approaches, Variational Quantum Algorithms (VQA) have emerged as a practical paradigm for near-term quantum devices, commonly referred to as Noisy Intermediate-Scale Quantum (NISQ) hardware \cite{cerezo2021variational, preskill2018quantum}. The performance of VQAs is highly sensitive to the underlying circuit architecture. Qubit mapping, circuit depth, gate composition, and parameterization jointly influence expressivity, trainability, and noise sensitivity. This motivates the development of automated approaches—commonly referred to as quantum architecture search (QAS)—that can efficiently design high-performing VQA architectures under realistic hardware constraints.

However, identifying optimal circuit structures for a given computational task remains a non-trivial challenge, particularly when balancing expressivity. Traditionally, researchers have relied on problem-specific insights to design problem-inspired ansätze \cite{kandala2017hardware, lee2018generalized}. More recently, automatic quantum circuit architecture search has emerged as a promising alternative, leveraging classical search algorithms \cite{duong2022quantum,Zhang_npredictor_2021,Zhang_differential_2022,kuo2021_qas_rl}. Among these methods, evolutionary algorithms (EAs) \cite{wang2022quantumnas, zhu2022adaptive,li2023eqnas} and Monte Carlo Tree Search (MCTS) \cite{lipardi2025quantum} have become prominent paradigms, owing to their effectiveness in navigating large, discrete, and non-differentiable circuit search spaces. Despite their promising results, these approaches often suffer from significant computational bottlenecks, as they require repeated training to evaluate the performance of a large number of candidate quantum circuits. To mitigate this issue, recent studies have proposed surrogate-based evaluations, including one-shot training with weight sharing \cite{wang2022quantumnas} and learned neural performance predictors \cite{zhang2021neural}. While these techniques reduce search cost, they still depend on partial training or the availability of additional learned models, which introduces new sources of bias and computational complexity. Meanwhile, only limited effort has been devoted to training-free architecture search for VQAs. Early attempts, such as (Training-Free Quantum Architecture Search) TF-QAS \cite{he2024training}, demonstrate that training-free proxies can efficiently identify circuits with desirable structural properties. However, this method primarily emphasizes circuit expressiveness and structural complexity, rather than task-specific learning behavior.

In this paper, we propose Asymptotic Minimum Eigenvalue Surrogate (AMES), a zero-shot proxy for estimating the loss of quantum neural networks. Our proxy evaluates the quality of a variational ansatz by estimating the convergence behavior of the training loss through properties of the quantum neural tangent kernel. We show that the loss convergence can be approximated independently of the training iteration, yielding a closed-form convergence metric that can be computed without performing any training. Building on this proxy, we develop an efficient and fully training-free framework for QNN architecture search that directly leverages AMES. This design enables rapid performance estimation without full training, significantly reducing search cost. Our approach integrates this proxy-based evaluation with MCTS, which iteratively refines circuit candidates by expanding the search space while focusing on promising architectures. To further improve scalability, we incorporate {random batch sampling, which reduces the computational complexity of tree expansion while preserving exploration quality and search effectiveness}. We refer to our method as MZeQAS. Experimental results show that MZeQAS consistently identifies high-performing quantum circuit architectures more efficiently than existing methods, achieving faster search times while delivering superior solution quality. This demonstrates that MZeQAS effectively balances search efficiency with the discovery of high-quality circuits, providing a scalable solution for quantum architecture design. The contributions of this paper are summarized as follows:
\begin{itemize}
    \item We show that under certain weight initialization distributions, the behavior of the Gram matrix in the Quantum Neural Tangent Kernel asymptotically approaches a well-defined expected value.
    \item We introduce AMES, a surrogate loss function for efficient quantum neural architecture evaluation, providing evidence of their effectiveness in reducing computational cost without compromising search quality.  
    \item We propose MZeQAS, a novel zero-shot quantum neural architecture search framework that integrates a zero-shot proxy model with MCTS and incorporates random batch sampling for enhanced scalability.  
    \item We demonstrate, through extensive experiments, that MZeQAS outperforms existing state-of-the-art methods in both search efficiency and discovered circuit quality for variational quantum algorithms.  
\end{itemize}

The remainder of this paper is organised as follows. The next section reviews the related literature, followed by a detailed description of the proposed approach. The experimental section then outlines the setup and presents an analysis of the results. Finally, the paper concludes with a summary of findings and potential directions for future research.

\section{Related Work}
\label{sec:rw}
\subsection{Quantum Architecture Search}
Recent quantum neural architecture search methods increasingly borrow ideas from classical neural architecture search to automate variational circuit design. Bayesian optimization–based approaches guide the search using quantum circuit distance that capture the behavior of the gates over the state space \cite{duong2022quantum}, while neural performance predictors estimate circuit quality without full training \cite{Zhang_npredictor_2021}. Differentiable search methods relax discrete architectural choices to enable gradient-based optimization \cite{Zhang_differential_2022}, and reinforcement learning frameworks model circuit construction as a sequential decision process \cite{kuo2021_qas_rl}.

Among NAS-inspired approaches, evolutionary-based methods remain among the most commonly used strategies for effectively discovering effective quantum circuit architectures. QuantumNAS \cite{wang2022quantumnas}, for instance, applies an evolutionary algorithm combined with a front-sampling technique that samples only the initial blocks and gates during the search. This approach reduces variance and improves convergence toward higher-quality solutions. Similarly, EQNAS \cite{li2023eqnas} adopts a quantum evolutionary algorithm, which leverages quantum principles for both encoding individuals and performing genetic operations within Evolutionary Algorithm (EA). These studies demonstrate the effectiveness of EQ-based methods for QAS. However, like many population-based approaches, they are prone to becoming trapped in local optima, as low-performing individuals are frequently discarded in favor of higher-performing ones. This neglects the potential of these individuals. More recently, researchers have explored MCTS for QAS \cite{wang2023automated}, where each node in the search tree represents a quantum circuit and the tree is incrementally expanded by adding quantum gates. Building on this work, \cite{lipardi2025quantum} enhanced MCTS by introducing progressive widening - Progressive Widening
enhanced Monte Carlo Tree Search (PWMCTS), further improving its search efficiency. However, since PWMCTS modifies only one gate at a time, it struggles to generate more complex quantum circuits in challenging tasks due to its fixed tree depth.

Retraining each configuration in QAS to obtain a fitness score is computationally expensive and time-consuming, especially given the limited access to quantum hardware. To mitigate this, several studies have proposed using surrogate models to estimate quantum circuit performance \cite{zhang2021neural, he2024meta, he2024training, he2024quantum, wang2022quantumnas}. For instance, QuantumNAS \cite{wang2022quantumnas} employs a weight-sharing technique, where a parameterized SuperCircuit is trained once and candidate SubCircuits inherit the corresponding weights, avoiding the need for full retraining. To eliminate any pretraining prior to evaluating the final circuit, TF-QAS \cite{he2024training} introduces a two-stage zero-shot proxy that estimates circuit performance based on the number of paths and the expressibility of the quantum circuit. Experimental results from TF-QAS demonstrate the potential of zero-shot proxies in QAS by significantly reducing search time while still identifying high-performing circuits. However, TF-QAS only considers the expressiveness of the circuit and does not account for how the relationship between the network and its loss function evolves during training. Moreover, it has been shown in the literature that expressiveness is directly linked to barren plateau in VQA \cite{larocca2025barren}.

\subsection{Zero-shot Neural Architecture Search}
Zero-shot proxies for evaluating neural network architectures are widely adopted in classical Neural Architecture Search (NAS) \cite{Lin_2021_ICCV, jiang2023meco, lizico, dong2025parzc}. They eliminate the need for explicit training, thereby significantly improving search efficiency. Zero-shot NAS methods typically rely on gradient information or loss landscape properties to construct surrogate objectives. For example, \cite{chen2021neural, shu2022unifying, zhu2022generalization} derive zero-shot proxies grounded in NTK theory \cite{DuZPS19}. In a subsequent work, \cite{jiang2023meco} propose a surrogate based on the minimum eigenvalue of the Pearson correlation matrix computed from feature maps. Overall, while a substantial body of work in NAS exploits loss convergence behavior or loss landscape analysis to design zero-shot proxies, similar approaches remain largely unexplored in QAS. This discrepancy highlights a significant research gap that our work aims to address.

\section{Preliminaries}
\subsection{Quantum Neural Networks}
\label{subsec: qnn}

A QNN begins by encoding a classical input vector $\mathbf{x}$ into an $d$-qubit quantum state within a Hilbert space, typically using a feature map of the form: 
$\ket{\psi(\mathbf{x})} = F(\mathbf{x})\ket{0}^{\otimes d}$, 
where $F(\mathbf{x})$ is a unitary operator that transforms classical data into a quantum representation of the input. This input state, represented as a density matrix $\boldsymbol{\rho}$, is then processed through a parameterized quantum circuit, commonly referred to as an ansatz, which consists of a sequence of rotation and entangling gates. The ansatz is defined by a set of trainable parameters $\boldsymbol{\theta}$, forming a unitary transformation expressed as $
U(\boldsymbol{\theta}) = \prod_{i=0}^{L} U_i(\theta_i)$, where each $U_i(\theta_i) := U_i\ \exp{(-i\theta\vt{H}_i)}$ represents a gate that depends on the corresponding parameter $\theta_i$ and Hermitian operator $\vt{H}_i$ while $L$ and $p$ denote the total number of layers and parameters, respectively. This unitary evolves the feature-encoded quantum state into a new state that captures the model’s learned representation of the input. The model output is obtained by measuring the expectation value of a Hermitian observable $M$, given by
\begin{equation}
f(\boldsymbol{\theta},\vt{x}) = \tr{\boldsymbol{\rho}U^\dagger(\boldsymbol{\theta})MU(\boldsymbol{\theta})}.
\label{eq: qnn}
\end{equation} 
For notational convenience, we define $\mathbf{M}(\boldsymbol{\theta}):=U^\dagger(\boldsymbol{\theta})MU(\boldsymbol{\theta})$. Throughout this paper, we restrict our attention to Pauli measurement operators $M$. Pauli measurements are tensor product of Pauli matrices which are $2 \times  2$ unitary Hermitian matrices $\sigma_X = 
    \begin{pmatrix}
    0 & 1 \\
    1 & 0 \\
    \end{pmatrix}$, 
$\sigma_Y = \begin{pmatrix}
    0 & -i \\
    i & 0 \\
    \end{pmatrix}$,
$\sigma_Z = \begin{pmatrix}
    1 & 0 \\
    0 & -1 \\
    \end{pmatrix}.$
  
Let us consider a dataset $\mathcal{D} = {(\mathbf{x}^{(m)}, y^{(m)})}_{m=1}^{|\mathcal{D}|}$, where each $\mathbf{x}^{(m)} \in \mathbb{R}^d$ represents a input vector and $y^{(m)} \in \mathcal{Y}$ is the corresponding label, drawn from a finite label set $\mathcal{Y}$. Similar to classical neural networks, the goal is to train a QNN to learn a function that maps input vectors to their associated outputs by minimizing a suitable loss function over the dataset, e.g the difference between the expected outputs and the labels  
\begin{equation}
    \mathcal{L}(\boldsymbol{\theta})=\frac{1}{2|\mathcal{D}|}\sum_{(\vt{x}^{(m)},y^{(m)})\in \mathcal{D}} (f(\boldsymbol{\theta},\vt{x}^{(m)})-y^{(m)})^2 
\end{equation}

The aim of the training process is to estimate the parameters, such that $\boldsymbol{\theta}^* = \argmin_{\boldsymbol{\theta}} \mathcal{L}(\boldsymbol{\theta})$. Current training methods for QNN rely on classical optimization techniques, utilizing gradient descent to search for the optimal parameters. This requires iterative parameter updates to minimize the loss, based on the gradient of the loss function over the parameters. We denote $\mathcal{L}(\boldsymbol{\theta}(t))$ and $\theta(t)$ as the loss function and parameters after $t$ training iterations.

\section{MCTS-based Zero-shot Quantum Neural Architecture Search}
\begin{figure*}[!ht]
    \centering
    \includegraphics[width=\linewidth]{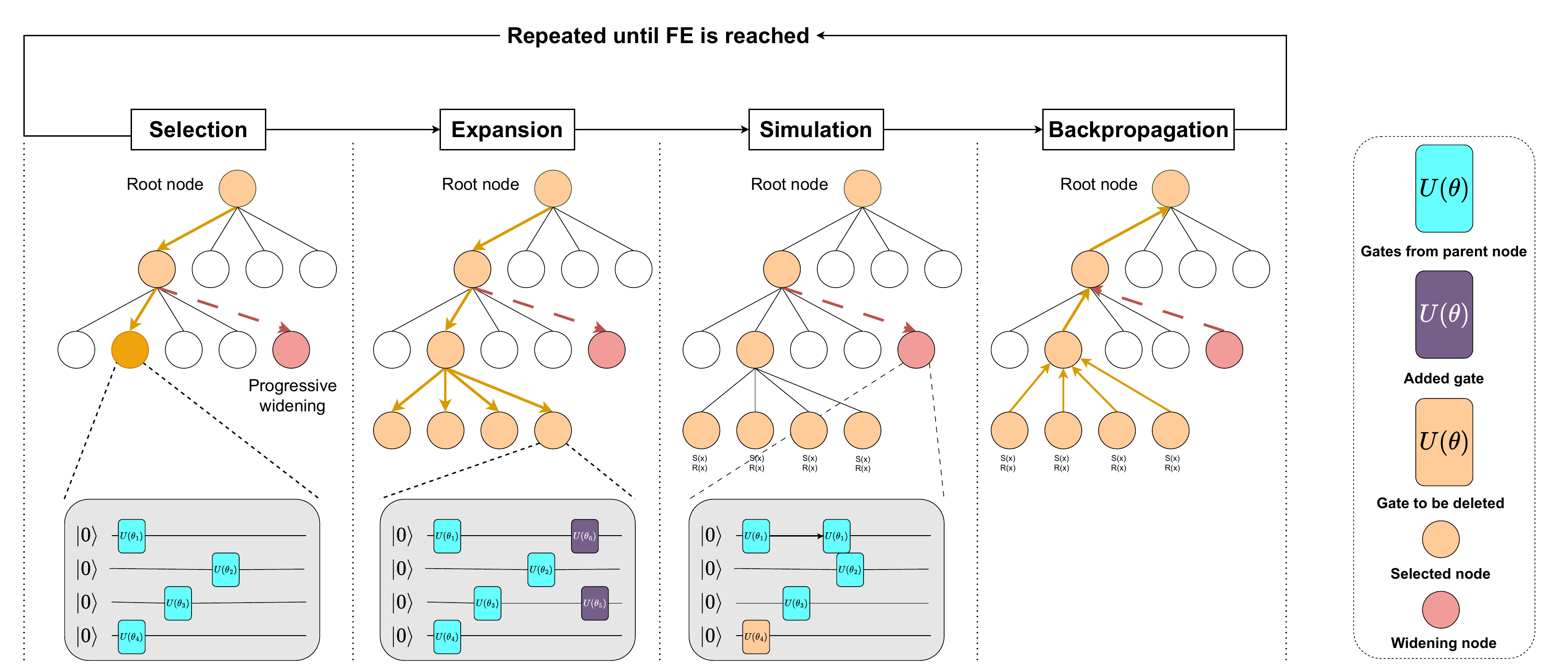}
    \caption{Each iteration of MZeQAS consists of four stages, from left to right: Selection, Expansion, Simulation, and Backpropagation. After selecting the most promising node, MZeQAS adds new gates (colored purple) to create a new node. Progressive widening may randomly delete gate(s) from the parent circuit (colored orange) to further diversify the search space. This process is repeated until the function evaluation (FE) budget is reached.}
    \label{fig:mcts}
\end{figure*}

\subsection{Quantum Neural Architecture Search}

Let $\mathcal{V}$ denote the architecture search space, and for each $v \in \mathcal{V}$ let $\Theta(v)$ denote the parameter space induced by architecture $v$. The objective of quantum architecture search is to identify an optimal architecture $v^*$ such that
\begin{equation}
v^* = \arg\min_{v \in \mathcal{V}} g(v).
\end{equation}
Candidate quantum circuits are generated by a search algorithm, and each candidate architecture $v$ is evaluated using a fitness function $g(v)$ that measures its quality. In prior QAS studies, the fitness function $g(v)$ is commonly defined using the training loss of a fully optimized model, for example $\mathcal{L}(\theta(t))$ where $\theta(t) \in \Theta(v)$, or by employing a surrogate objective that approximates this quantity to reduce computational cost.

\label{sec:zero-shot}
\subsection{Loss Function Convergence of QNN}
In this section, we revisit the neural tangent kernel (NTK) analysis of the training dynamics of QNNs with Pauli measurements \cite{you2023analyzing}. It is shown that when the number of trainable parameters \(p\) is sufficiently large and the minimum eigenvalue of the QNN Gram matrix \(\mathbf{K}_{\mathrm{asym}}(t)\) is lower bounded by a positive constant \(\lambda_0\), the training loss of a periodic ansatz (defined in Appendix~\ref{appendix: periodic ansatz}) converges linearly with high probability under random initialization:
\begin{equation}
\label{eq:linear-conv}
\mathcal{L}(\vt{\theta}(t)) \leq \mathcal{L}(\vt{\theta}(0)) \exp\!\left(-\frac{\lambda_0 t}{2}\right).
\end{equation}
The entries of the asymptotic kernel \(\mathbf{K}_{\mathrm{asym}}(t)\) are given by
\begin{equation}
\label{eq:kernel}
\bigl(K_{\mathrm{asym}}(t)\bigr)_{kj}
:= 
\operatorname{tr}(i\!\left[\mathbf{M}(\vt{\theta}(t)), \vt{\rho}_k\right]
\, 
i\!\left[\mathbf{M}(\vt{\theta}(t)), \vt{\rho}_j\right]
).
\end{equation}

The time-dependent matrix $\mathbf{K}_{\mathrm{asym}}(t)$ governs the evolution of the residual vector, which captures the discrepancy between model predictions and target labels during training. As a result, analyzing the training behavior of a QNN can be reduced to studying the spectral properties of $\mathbf{K}_{\mathrm{asym}}(t)$, with the minimum eigenvalue directly controlling the convergence rate. This observation would provide a principled link between circuit architecture and optimization performance. However, as evident from Eq.~\eqref{eq:kernel}, the kernel remains training-dependent and, in general, requires tracking the parameter trajectory $\vt{\theta}(t)$, which limits its direct applicability for efficient architecture evaluation.

\subsection{Asymptotic Minimum Eigenvalue Surrogate}

In this section, we show how to construct a training-free surrogate function based on the kernel defined in Eq.~\eqref{eq:kernel}. To eliminate the need for explicit parameter optimization, we leverage the overparameterization setting of QNNs, under which the training dynamics admit a kernel-based characterization. In particular, we show that when the number of trainable parameters is sufficiently large, the time-dependent kernel $\mathbf{K}_{\mathrm{asym}}(t)$ approximately concentrates around a time-independent limit that depends only on the circuit architecture and the parameter distribution.
\begin{theorem}
\label{theorem: t=inf}
Consider an overparameterized QNN with $p$ trainable parameters. As $p \to \infty$, the time-dependent Gram matrix $\mathbf{K}_{\mathrm{asym}}(t)$ converges entry-wise to a time-independent limiting matrix $\mathbf{K}_{\mathrm{asym}}^{\infty}$, i.e.,
\[
\mathbf{K}_{\mathrm{asym}}(t) \;\approx\; \mathbf{K}_{\mathrm{asym}}^{\infty}.
\]
Each entry of the limiting Gram matrix is given by
\begin{equation}
\label{eq:kasym_approx}
(\mathbf{K}_{\mathrm{asym}}^{\infty})_{kl}
= \mathbb{E}_{\boldsymbol{\theta}\sim \mathcal{N}(\mathbf{0},\vt{I}/p)}
\!\left[
\operatorname{tr}\!\left(
i[M(\boldsymbol{\theta}), \rho_k]\,
i[M(\boldsymbol{\theta}), \rho_l]
\right)
\right],
\end{equation}
where $\mathcal{N}(\mathbf{0},\vt{I}/p)$ denotes the standard multivariate Gaussian distribution over circuit parameters.
A proof is provided in Appendix~\ref{appendix:proof}.
\end{theorem}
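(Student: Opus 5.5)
The plan has two parts. First, I would show that in the overparameterized regime the parameters barely move during training, so $\mathbf{K}_{\mathrm{asym}}(t)$ stays close to $\mathbf{K}_{\mathrm{asym}}(0)$ uniformly in $t$. Second, I would show that $\mathbf{K}_{\mathrm{asym}}(0)$, as a random function of the initialization $\boldsymbol{\theta}(0)\sim\mathcal{N}(\mathbf{0},\vt{I}/p)$, concentrates around its mean, which is exactly the right-hand side of Eq.~\eqref{eq:kasym_approx}. Combining the two with the triangle inequality gives, for each entry $(k,l)$,
\[
\bigl|(K_{\mathrm{asym}}(t))_{kl}-(K^{\infty}_{\mathrm{asym}})_{kl}\bigr| \le \bigl|(K_{\mathrm{asym}}(t))_{kl}-(K_{\mathrm{asym}}(0))_{kl}\bigr| + \bigl|(K_{\mathrm{asym}}(0))_{kl}-(K^{\infty}_{\mathrm{asym}})_{kl}\bigr|.
\]
The goal is then to show that both terms vanish as $p\to\infty$ with high probability.

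For the first (lazy-training) term I would reuse the analysis of \cite{you2023analyzing} behind Eq.~\eqref{eq:linear-conv}.

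\textbf{Step 1.} Under gradient flow on $\mathcal{L}$, each parameter moves as $\dot\theta_i = -\frac{1}{|\mathcal{D}|}\sum_m r_m\,\partial_{\theta_i} f(\boldsymbol{\theta},\vt{x}^{(m)})$, where $r_m$ is the $m$-th residual. For a periodic ansatz with the learning rate scaled by $1/p$ (equivalently, the $1/p$ normalization that makes $\mathbf{K}_{\mathrm{asym}}$ an $O(1)$ average over gates), each gate derivative is a commutator trace with $\|[\vt{H}_i,\cdot]\|=O(1)$. This gives $|\dot\theta_i|=O(\|r(t)\|/p)$.

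\textbf{Step 2.} By Eq.~\eqref{eq:linear-conv}, $\|r(t)\|$ decays exponentially. Integrating the bound from Step 1 therefore gives $\sup_t\|\boldsymbol{\theta}(t)-\boldsymbol{\theta}(0)\|_\infty = O(\sqrt{\mathcal{L}(0)}/(\lambda_0 p))$, and so $\|\boldsymbol{\theta}(t)-\boldsymbol{\theta}(0)\|_2=O(1/\sqrt p)$.

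\textbf{Step 3.} The map $\boldsymbol{\theta}\mapsto \operatorname{tr}(i[M(\boldsymbol{\theta}),\rho_k]\,i[M(\boldsymbol{\theta}),\rho_l])$ is smooth. Its gradient has components bounded by $O(\|M\|^2\|\vt{H}_i\|)$, since the derivative with respect to each $\theta_i$ inserts a single commutator with $\vt{H}_i$. After the $1/p$ kernel normalization, its Lipschitz constant with respect to the parameter displacement is therefore small, and the first term is $o(1)$ uniformly in $t$. This is what removes the time dependence.

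For the second term I would apply Gaussian concentration.

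\textbf{Step 4.} Under $\boldsymbol{\theta}\sim\mathcal{N}(\mathbf{0},\vt{I}/p)$, the variance of a Lipschitz function $g$ is bounded by the Gaussian Poincaré inequality, $\mathrm{Var}(g)\le \frac1p\mathbb{E}\|\nabla g\|_2^2$. Using the periodic structure, the gradient components are bounded, which bounds this variance.

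\textbf{Step 5.} Chebyshev's inequality, or the Gaussian Lipschitz concentration inequality for exponential tails, then gives $(K_{\mathrm{asym}}(0))_{kl}\to \mathbb{E}[\cdot]=(K^{\infty}_{\mathrm{asym}})_{kl}$ in probability. Finally, a union bound over the $|\mathcal{D}|^2$ entries yields entry-wise convergence for the whole matrix.

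The main obstacle is reconciling the scalings. If the kernel entry is the raw trace in Eq.~\eqref{eq:kernel}, with no $1/p$ averaging, then $\|\nabla g\|_2^2$ can itself be $O(p)$, and Poincaré gives only an $O(1)$ variance bound, not a vanishing one. I would first try to fix this by making the $1/p$ normalization of the NTK explicit, as in the gradient-flow derivation of \cite{you2023analyzing}. The other route is to observe that with variance $1/p$ the whole angle vector has norm $O(1)$ while the individual angles are $O(1/\sqrt p)$, so $M(\boldsymbol{\theta})$ is a product of near-identity rotations. One can then Taylor expand it around $\boldsymbol{\theta}=\mathbf{0}$ and control the fluctuations by the central limit behavior of $\sum_i\theta_i\vt{H}_i$. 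The same care is needed in Step 1 to make sure the parameter drift really is $o(1)$ in the relevant norm, and I expect this scaling bookkeeping to be where the argument is genuinely delicate.
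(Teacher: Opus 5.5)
There is a genuine gap at Step~3. The scaling issue you flag at the end is not bookkeeping: it cannot be repaired, because the time-independence claim fails for the kernel in Eq.~\eqref{eq:kernel}. Steps~1--2 are fine as far as they go, giving each $|\theta_i(t)-\theta_i(0)|=O(1/p)$. Step~3, however, needs $\sum_{i=1}^{p}|\Delta\theta_i|\,|\partial_{\theta_i}(K_{\mathrm{asym}})_{kl}|$ to be small. Summing $p$ terms of size $O(1/p)\cdot O(1)$ gives only $O(1)$, because the displacements add coherently along the gradient direction.

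The $1/p$ normalization you invoke belongs to the gradient Gram matrix $\sum_i\partial_{\theta_i}f\,\partial_{\theta_i}f$, from which $\mathbf{K}_{\mathrm{asym}}$ emerges after Haar averaging. It does not belong to $\mathbf{K}_{\mathrm{asym}}$ itself, which depends on $\boldsymbol{\theta}$ only through $M(\boldsymbol{\theta})$. And $M(\boldsymbol{\theta})$ must move by $\Theta(1)$: $f$ is linear in it, and the outputs must move by $\Theta(1)$ for the loss to drop.

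Concretely, take a pure encoded state $\rho_k=|\psi_k\rangle\langle\psi_k|$ and a Pauli $M$, so that $M_t:=M(\boldsymbol{\theta}(t))$ satisfies $M_t^2=I$. Then
\[
(K_{\mathrm{asym}}(t))_{kk}=2\operatorname{tr}(\rho_k^2M_t^2)-2\operatorname{tr}(M_t\rho_kM_t\rho_k)=2\bigl(1-f(\boldsymbol{\theta}(t),\mathbf{x}^{(k)})^2\bigr).
\]
So the diagonal of the kernel is a function of the model outputs, and it changes by $\Theta(1)$ whenever training makes progress. With labels $\pm1$ it is driven to $0$, forcing $\lambda_{\min}(\mathbf{K}_{\mathrm{asym}}(t))\to0$. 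This is the deviation-from-NTK phenomenon of \cite{you2023analyzing}, so no lazy-training argument can remove the $t$-dependence.

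Step~4 fails for a related reason. Write $U(\boldsymbol{\theta})=V\prod_j e^{-i\theta_j\tilde{H}_j}$, where the $\tilde{H}_j$ are Haar-conjugated copies of $\vt{H}$ (i.i.d.\ for the periodic ansatz). Conditionally on the $U_i$, the sum $\sum_j\theta_j\tilde{H}_j$ is Gaussian with covariance $\frac1p\sum_j\tilde{H}_j\otimes\tilde{H}_j$, and this covariance converges to a nonzero limit. Your CLT route therefore shows that $U(\boldsymbol{\theta})$, $f$, and hence $(K_{\mathrm{asym}})_{kk}$ generically keep $\Theta(1)$ fluctuations at fixed dimension. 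In other words, the $O(1)$ Poincar\'e bound is sharp, not lossy.

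The paper's proof takes a different route with no dynamical part. It Taylor-expands $U(\boldsymbol{\theta})$ around $\boldsymbol{\theta}=\mathbf{0}$, splits the entry into constant, linear and quadratic pieces, and applies Hanson--Wright at $\boldsymbol{\theta}\sim\mathcal{N}(\mathbf{0},\mathbf{I}/p)$, as if $\boldsymbol{\theta}(t)$ kept its initial law. It hits the same wall you identified without resolving it:
\begin{itemize}
\item The linear piece $\sum_l\theta_lc_l$ has variance $\frac1p\sum_lc_l^2=\Theta(1)$; only its mean is computed.
\item $B$ has $O(1)$ entries and bounded rank, so $\|B\|_F$ and $\|B\|_{op}$ are of order $p$. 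The Hanson--Wright exponent $\min(p^2s^2/\|B\|_F^2,\,ps/\|B\|_{op})$ is then of order $\min(s^2,s)$, not $p$.
\item The discarded second-order Taylor terms have size of order $\|\boldsymbol{\theta}\|_2^2\approx1$, so they do not vanish.
\end{itemize}
You will not find the missing idea there either. What is provable is at most a statement about the averaged kernel at initialization, not entry-wise convergence of $\mathbf{K}_{\mathrm{asym}}(t)$ to a time-independent limit.
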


We leverage the convergence result in Equation 4 together with Theorem~\ref{theorem: t=inf} to construct the Asymptotic Minimum Eigenvalue Surrogate (AMES). Let $\lambda_{\min}$ denote the minimum eigenvalue of the asymptotic Gram matrix $\mathbf{K}_{\mathrm{asym}}^{\infty}$. Using Corollary \ref{cor:lambda-equal}, the minimum eigenvalue $\lambda_0$ appearing in Eq.~\eqref{eq:linear-conv} can be approximated by $\lambda_{\min}$.  Combined with the linear convergence property, this approximation provides the following upper bound on the converged training loss:
\begin{equation}
\label{eq:ames_upper_bound}
\mathcal{L}(\boldsymbol{\theta}^*)
\;\leq\;
\mathcal{L}(\boldsymbol{\theta}(0)) \,
\exp\!\Big(-\frac{\lambda_{\min}t}{2}\Big),
\end{equation}
where $\boldsymbol{\theta}^*$ denotes the optimized parameters after convergence.

This bound characterizes the worst-case convergence behavior of a variational quantum circuit: architectures associated with a lower upper bound will be more likely to attain a smaller training loss.

In the overparameterized setting of QNNs, the Gram matrix $\mathbf{K}_{\mathrm{asym}}^\infty$ becomes highly concentrated. That is, as the number of trainable parameters $p$ grows, the variance of the entries of $\mathbf{K}_{\mathrm{asym}}^\infty$ under random initialization decreases. This is analogous to classical Neural Tangent Kernel results, where the NTK converges to its expectation in the infinite-width limit \cite{jacot2018neural}. Therefore, sampling a single $\boldsymbol{\theta} \sim \mathcal{N}(\mathbf{0}, \vt{I}/p)$ is sufficient to approximate $\mathbf{K}_{\mathrm{asym}}^\infty$ with high probability.

Finally, since overparameterization is known to improve kernel conditioning and training dynamics, we explicitly encourage architectures with larger parameter counts by scaling the surrogate with the inverse of the number of circuit parameters, denoted by $p_c$. The resulting AMES score is defined as

\begin{equation}
    g(c) = \frac{1}{p_c} \mathcal{L}(c(0)) 
    \exp\left(-\frac{\lambda_{\min}}{2}\right).
    \label{eq:surrogate}
\end{equation}

which balances trainability and model capacity while remaining entirely training-free.

Note that although the \autoref{theorem: t=inf} and Eq.~\eqref{eq:linear-conv} were originally constructed for Mean Squared Error (MSE) loss,  \cite{Goodfellow-et-al-2016} points out that if the model follows Gaussian distribution then optimizing for MSE is equivalent to optimizing for Cross Entropy (CE) loss. Therefore, our surrogate model can also be applied to classification tasks.

\subsection{Monte Carlo Tree Search}

MCTS offers a robust solution to limitations of evolutionary methods, such as their tendency to get trapped in local optima, enabling the generation of more complex quantum circuits. MCTS mitigates this by preserving all generated circuit structures within its tree, allowing the search to backtrack and explore alternative paths when encountering stagnation. In our approach, each node in the MCTS tree represents a quantum circuit, with the root node being an empty circuit. Quantum gates are progressively added as the tree's depth increases. Furthermore, to overcome the limitations in PWMCTS, we implement batch sampling during each expansion step, which facilitates the generation of more intricate circuits. Figure \ref{fig:mcts} illustrated our method MZeQAS. 
\\
\textit{Selection}: At the selection stage, MZeQAS calculate the $UCT$ for each child node $\hat{c} \in children(c)$ of node $c$ using the following equation:
\begin{equation}
    UCT(\hat{c}) = \left(\frac{Q(\hat{c}) - q_{min}}{q_{max} - q_{min}} \right) + \beta\sqrt{\frac{ln(N(c) + 1}{N(\hat{c})}}).
    \label{eq:uct}
\end{equation}
In this formula, $q_{max}$ and $q_{min}$ are the highest and lowest $Q$ values encountered anywhere within the MCTS tree. Furthermore, MZeQAS employed the \textit{Exploration decay} techniques \cite{zhengmonte}. This technique dynamically adjusts $\beta$ to prioritize exploration in the early phases of the search and gradually shift emphasis towards exploitation. The decay of $\beta$ is determined by:
\begin{equation}
    \beta = \beta_0*\frac{FE-fe}{FE},
    \label{eq: decay}
\end{equation}
where $FE$ denotes the maximum number of function evaluations and $fe$ is the current number of evaluations performed.
\vspace{5pt}
\\
\textit{Expansion}: After identifying the most promising node $c^*$, MZeQAS proceeds with expansion. The available gates constitute the search space $S = \{s_1, s_2,..., s_G\}$, with $R = \{r_1, r_2,...,r_G\}$ be the probability of choosing each gate $s_i$. Gates are randomly sampled according to these probabilities. Subsequently, a variable number ($H$, $1 \le H \le L$) of these selected gates are appended to $c^*$. The above steps are repeated $B$ times to generate the new set of children $children(c^*)$.
\vspace{5pt}
\\
\textit{Simulation}: This step evaluates the performance of each newly created node. In MZeQAS, the performance of each node, $g(.)$, is derived from Eq.~\ref{eq:surrogate}. We set the quality value to the negative of AMES, $Q(c) \gets -g(c)$, record their visit count $N(c) \gets 1$ and update $q_{max}$ and $q_{min}$.
\vspace{5pt}
\\
\textit{Backpropagation}: Finally, we traverse upward from the current leaf to the tree root. The quality value and visit count will be updated according to these equations:
\begin{equation}
    Q(c) \gets \max_{\hat{c} \in Children(c)} Q(\hat{c}),
\end{equation}
\begin{equation}
    N(c) \gets \sum_{\hat{c} \in Children(c)} N(\hat{c}).
\end{equation}
\vspace{5pt}
\\
\textit{Progressive widening}: This technique, introduced in \cite{coulom2007computing} and subsequently adopted in QAS task by \cite{lipardi2025quantum}, is employed to strategically re-explore non-leaf nodes. A new child node is added to a node $c$ when its visitation count satisfied the condition,
\begin{equation}
    \lfloor N(c)^\alpha \rfloor \ge |children(c)|.
\end{equation}
effectively increasing the branching factor for frequently visited paths. Additionally, MZeQAS incorporates a 'delete' operator. This operator creates a new child node $\hat{c}$ by randomly removing gates from its parent. 

\paragraph{Connection between MZeQAS's depth in batch-sampling and AMES} Let $D$ denote the maximum depth of an MCTS tree. In the batch-sampling strategy, the expected number of variables added at each expansion step is the expected value of $H$, $E[H] = E\left[\frac{L}{2}\right],$,
which implies that the quantum circuit grows at a faster rate ($O(\frac{L}{2})$) and can reach the overparameterized regime more quickly. In contrast, in the standard MCTS approach, only one gate is added per expansion step, resulting in a maximum circuit size of $O(D)$. Therefore, the batch-sampling strategy employed in MZeQAS is particularly well-suited for the surrogate function AMES and beneficial to QAS.

\section{Experiment}

\subsection{Tasks \& Datasets}

This section evaluates the performance of MZeQAS in comparison with hand-crafted QNN and other QAS-based methods. We evaluate our framework under three scenarios: classical image classification, quantum phase recognition and hybrid-classic architecture.

\noindent\textbf{Classical Image Classification}. We evaluate the performance of MZeQAS on classical task with the image classification dataset MNIST \cite{deng2012mnist}, Fashion-MNIST \cite{xiao2017fashion} and Vowel dataset \cite{vowel}. MZeQAS is compared against QuantumNAS \cite{wang2022quantumnas}, PWMCTS \cite{lipardi2025quantum} and EQNAS \cite{li2023eqnas}. We implemented MZeQAS using the TorchQuantum library \cite{wang2022quantumnas}. The comparison with EQNAS uses two-qubit gates: \textit{XX}, \textit{YY}, and \textit{ZZ}. A detailed breakdown of dataset and their corresponding label, input size and non-linear data encoder is provided in \autoref{sec: expsetup}. More information about experiment setup can be found in \autoref{sec: expsetup} in the Supplementary Material. 

\noindent\textbf{Quantum Phase Recognition (QPR)}. Quantum Phase Recognition is a classification problem that aims to identify phases of quantum matter, a fundamental task in condensed matter physics~\cite{sachdev1999quantum, broecker2017machine, carrasquilla2017machine, ebadi2021quantum, sachdev2023quantum}. In this work, we adopt the same QPR task, data generator, and QCNN implementation from~\cite{hurunderstanding}, which are based on the generalized cluster Hamiltonian. Our main objective is to evaluate whether MZeQAS can discover circuits that approximate the performance of human-designed architectures. To this end, we generate 1,000 instances for training and 100 instances for testing. Further details on the experimental configuration are provided in \autoref{sec: expsetup}.

\noindent\textbf{Hybrid Quantum-Classical Architecture}. 
To examine whether MZeQAS remains effective beyond standalone QNN classifiers, we incorporate it into the Quantum-Train (QT) framework~\cite{liu2025quantum}. In QT, the QNN serves as a compact parameter generator: its measurement outcomes are passed through a lightweight classical network to produce the weights of a classical Convolutional Neural Network (CNN), which is then used for final prediction. This setting allows us to evaluate whether MZeQAS can discover useful QNN modules within a hybrid quantum-classical learning pipeline. For this task, MZeQAS is used to search for the QNN component. Additional experimental details are provided in \autoref{sec: expsetup}.

\subsection{Classical image classification}

\begin{table*}[ht]
\scriptsize
\centering
\caption{Mean accuracy and standard deviation on different image classification datasets and number of classes}
\label{tab:accuracy_comparison}
\begin{tabular}{l|cccc|cc}
\hline
\multirow{3}{*}{\textbf{Datasets}}
& \multicolumn{4}{c|}{\textbf{U3 + CU3}} 
& \multicolumn{2}{c}{\textbf{XX + YY + ZZ}} \\
\cmidrule(lr){2-5} \cmidrule(lr){6-7}
& \textbf{\makecell{MZeQAS \\ (ours)}}
& \textbf{\makecell{Quantum-\\NAS}}
& \textbf{PWMCTS}
& \textbf{\makecell{TF-\\QAS}}
& \textbf{EQNAS} & \textbf{\makecell{MZeQAS \\ (ours)}} \\
\hline
MNIST-2   & 85.6 \(\pm\) 0.2 & 84.0 \(\pm\) 0.3 & 78.7 $\pm$ 1.0 & \cellcolor{gray!30}\textbf{86.2 \(\pm\) 0.2} & 85.1 $\pm$ 1.9 & 86.0 $\pm$ 0.5\\
MNIST-4   & \cellcolor{gray!30}\textbf{77.3 $\pm$ 0.8} & 70.5 $\pm$ 0.2 & 65.9 $\pm$ 1.0 & 56.7 $\pm$ 0.3 & - & 76.0 $\pm$ 1.3\\
MNIST-10  & \cellcolor{gray!30}\textbf{53.4 $\pm$ 2.2} & 51.0 $\pm$ 3.7 & 14.4 $\pm$ 0.8 & 26.7 $\pm$ 2.7 & - & 46.0 $\pm$ 3.6\\
\hline
Fashion-2 & 83.0 $\pm$ 0.8 & \cellcolor{gray!30}\textbf{86.7 $\pm$ 1.8} & 85.5 $\pm$ 1.2 & 81.3 $\pm$ 1.3 & 55.3 $\pm$ 1.5 & 79.2 $\pm$ 1.7\\
Fashion-4 & \cellcolor{gray!30}\textbf{78.5 $\pm$ 0.7} & 74.1 $\pm$ 5.8 & 57.6 $\pm$ 3.5 & 68.7 $\pm$ 2.2 & - & 77.0 $\pm$ 1.2\\
Fashion-10& \cellcolor{gray!30}\textbf{53.7 $\pm$ 0.9} & 48.0 $\pm$ 2.6 & 8.5 $\pm$ 0.3 & 41.6 $\pm$ 2.6 & - & 45.0 $\pm$ 4.5 \\
\hline
Vowel-4   & \cellcolor{gray!30}\textbf{51.0 $\pm$ 2.3} & 39.8 $\pm$ 3.6 & 42.3 $\pm$ 4.2 & 38.1 $\pm$ 3.7 & - & 39.8 $\pm$ 0.9\\
\hline
\end{tabular}

\end{table*}

\begin{table}[h]
\caption{Comparison of runtime, including search and training, between MZeQAS and QuantumNAS. Runtime is measured in hours.}
\label{tab:runtime}
\centering
\begin{tabular}{llcccc}
\hline
\textbf{Dataset} & \textbf{Method} & \textbf{Search} & \textbf{Training} & \textbf{Total} & \textbf{\%Imp.} \\ 
\hline

\multirow{2}{*}{MNIST-2}
& QuantumNAS      & 0.114 & 0.131 & 0.245 & -- \\
& MZeQAS (ours)   & \cellcolor{gray!30}\textbf{0.040} 
                  & \cellcolor{gray!30}\textbf{0.107} 
                  & \cellcolor{gray!30}\textbf{0.147} 
                  & \cellcolor{gray!30}\textbf{40.0\%} \\
\hline

\multirow{2}{*}{MNIST-4}
& QuantumNAS      & 0.168 
                  & \cellcolor{gray!30}\textbf{0.115} 
                  & 0.283 
                  & -- \\
& MZeQAS (ours)   & \cellcolor{gray!30}\textbf{0.058} 
                  & 0.126 
                  & \cellcolor{gray!30}\textbf{0.184} 
                  & \cellcolor{gray!30}\textbf{34.9\%} \\
\hline

\multirow{2}{*}{MNIST-10}
& QuantumNAS      & 0.717 
                  & \cellcolor{gray!30}\textbf{0.853} 
                  & 1.570 
                  & -- \\
& MZeQAS (ours)   & \cellcolor{gray!30}\textbf{0.439} 
                  & 0.955 
                  & \cellcolor{gray!30}\textbf{1.394} 
                  & \cellcolor{gray!30}\textbf{11.2\%} \\
\hline

\end{tabular}
\end{table}

\paragraph{Accuracy comparison} \autoref{tab:accuracy_comparison} presents a detailed comparison of MZeQAS, QuantumNAS, PWMCTS, TF-QAS and EQNAS on three benchmark datasets: MNIST, Fashion-MNIST, and Vowel, each with varying numbers of classes. The experiment is conducted in the 'U3' and 'CU3' search space, where 'CU3' gates are arranged in ring connections such that for every qubit $i$, a CU3($i$, $i+1$) operation can be applied. Meanwhile, we also conduct comparison with EQNAS on a different set of gate, 'XX', 'YY' and 'ZZ', that is used in the original paper \cite{li2023eqnas}. Since EQNAS uses a single read-out qubit as output, we only run it on two-class classification datasets. For fair comparison, we set the same timeout for all approaches to 6 hours. MZeQAS consistently outperforms QuantumNAS and PWMCTS across most datasets, demonstrating both strong accuracy and robustness. The only exception is Fashion-2, where QuantumNAS and PWMCTS achieve slightly higher accuracies (86.7\% and 85.5\%, respectively) compared to MZeQAS at 83.0\%. However, as the number of classes and the number of qubits increases, PWMCTS exhibits significant scalability challenges. While it performs comparably to the other methods on smaller datasets, its accuracy drops sharply on MNIST-10 and Fashion-10, highlighting its difficulty in constructing deeper or more complex circuits within a fixed tree depth. By contrast, MZeQAS maintains consistently strong performance across all datasets and class settings. This stability can be attributed to its batch-sampling strategy, which mitigates the search-space exploration limitations observed in previous MCTS-based methods such as PWMCTS. Moreover, in comparison with another zero-shot proxy QAS approach, TF-QAS, our method performs better in most task except for MNIST-2. These findings indicate that MZeQAS not only scales better to more complex search spaces but also retains high accuracy as the search task becomes increasingly challenging. Furthermore, compared with EQNAS in a different search space, MZeQAS also achieves higher accuracy on binary classification task although it's lower than the result using U3 and CU3 search space.

\paragraph{Runtime analysis} Table~\ref{tab:runtime} shows that MZeQAS consistently achieves lower runtime compared to QuantumNAS across all datasets. This runtime analysis was run with a device contains 96-core Intel(R) Xeon(R) CPU @ 2.00GHz and 335GB of RAM. A detailed breakdown of the runtime components shows that the majority of the improvement comes from faster search time. For instance, in MNIST-2, the search phase drops from 0.114 to 0.040, while the training time remains decently close (0.131 vs. 0.107). This pattern is consistent across MNIST-4 and MNIST-10, highlighting the efficiency of MZeQAS during the search phase. Unlike QuantumNAS, MZeQAS does not require the evaluation of a SuperCircuit, thereby eliminating the need to train a complex overarching circuit before searching for high-performing circuits. By streamlining the search phase, MZeQAS is capable of identifying high-performing circuits more quickly, making it advantageous in many scenarios. 

\subsection{Quantum Phase Recognition}

\begin{table}[ht]
    \centering
    \caption{Accuracy comparison on the clustering Quantum Phase Recognition task. Mean accuracy and standard deviation are reported, with the number of parameters shown in parentheses.}
    \label{tab:acc_qpr}
    \begin{tabular}{c|c|c|c|c}
        \hline
        \textbf{Task}
        & \textbf{SEL}
        & \textbf{\makecell{QCNN\\(1 Layer)}}
        & \textbf{\makecell{QCNN\\(9 Layers)}}
        & \textbf{MZeQAS} \\
        \hline

        \makecell{Clustering\\(8 qubits)}
        & 48.0 $\pm$ 0.025
        & \makecell{71.5 $\pm$ 0.016\\(195)}
        & \makecell{88.2 $\pm$ 0.015\\(1635)}
        & \makecell{\textbf{88.2 $\pm$ 0.019}\\(1069)} \\
        \hline

        \makecell{Clustering\\(10 qubits)}
        & 51.0 $\pm$ 0.046
        & \makecell{78.5 $\pm$ 0.015\\(225)}
        & \makecell{\textbf{88.7} $\pm$ 0.008\\(1920)}
        & \makecell{88.0 $\pm$ 0.019\\(1001)} \\
        \hline
    \end{tabular}
\end{table}

The one-dimensional generalized cluster Hamiltonian is defined as $H(J_1, J_2) = \sum_{j=1}^{N} 
\left( Z_j - J_1 X_j X_{j+1} - J_2 X_{j-1} Z_j X_{j+1} \right)$ where $X_j$ and $Z_j$ are Pauli operators acting on site $j$ while $J_1$ and $J_2$ are tunable coupling parameters. These parameters determine the phase of the system, which may be ferromagnetic, antiferromagnetic, symmetry-protected topological (SPT), or trivial. \autoref{tab:acc_qpr} records the accuracy and number of parameters of MZeQAS against QCNN and Strongly Entangling Layers (SEL) ansatz  on this task. The number of parameters for QCNN varies with the number of qubits: QCNN with 1 layer uses 195 parameters for the 8-qubit task and 225 parameters for the 10-qubit task, while QCNN with 9 layers uses 1635 and 1920 parameters, respectively.. In the 8-qubit clustering task, MZeQAS matches the performance of QCNN with 9 layers, achieving 88.2\% accuracy while using substantially fewer parameters. Meanwhile, on the 10-qubit task, MZeQAS substantially outperforms QCNN with 1 layer and achieves accuracy close to QCNN with 9 layers, while using two-third the number of parameters. This result shows that our method can achieve similar performance to that of hand-crafted architecture.

\subsection{Hybrid-classic architecture}

\begin{table}[ht]
\centering
\scriptsize
\caption{Comparison of classical CNN, manually designed QT models, and our searched QT architecture on MNIST and CIFAR-10. \textsuperscript{*}}
\label{tab:hybrid_classic_mnist_cifar10}
\begin{tabular}{l|c|c|c|c}
\hline
\textbf{Model} & \textbf{Topology} & \textbf{Mapping model} & \textbf{Accuracy} & \textbf{Trainable params} \\
\hline
\multicolumn{5}{c}{\textbf{MNIST}} \\
\hline
Classical CNN 
& \multicolumn{2}{c|}{2 conv. layers, max pooling, 192-20, 20-10} 
& 95.5 
& 6690 \\

\hline
QT-1 
& 1 U3-CU3 gate layer 
& 14-4, 4-20, 20-4, 4-1 
& 33.5 
& 327 \\
QT-4 
& 4 U3-CU3 gate layers 
& 14-4, 4-20, 20-4, 4-1 
& 56.0
& 561 \\
QT-7 
& 7 U3-CU3 gate layers 
& 14-4, 4-20, 20-4, 4-1 
& 82.0 
& 795 \\
QT-10 
& 10 U3-CU3 gate layers 
& 14-4, 4-20, 20-4, 4-1 
& 85.3 
& 1029 \\
QT-13 
& 13 U3-CU3 gate layers 
& 14-4, 4-20, 20-4, 4-1 
& 90.7 
& 1263 \\
QT-16 
& 16 U3-CU3 gate layers 
& 14-4, 4-20, 20-4, 4-1 
& 91.8 
& 1497 \\
\hline
MZeQAS 
& searched QNN 
& 14-4, 4-20, 20-4, 4-1 
& \textbf{94.4} 
& \textbf{786} \\
\hline

\multicolumn{5}{c}{\textbf{CIFAR-10}} \\
\hline
Classical CNN 
& \multicolumn{2}{c|}{2 conv. layers, max pooling, 2048-128, 128-64, 64-10} 
& 62.58 
& 285226 \\
\hline
QT-19 
& 19 U3-CU3 gate layers 
& 20-40, 40-200, 200-40, 40-1 
& 31.58 
& 19287 \\
QT-38 
& 38 U3-CU3 gate layers 
& 20-40, 40-200, 200-40, 40-1 
& 33.12 
& 21543 \\
QT-57 
& 57 U3-CU3 gate layers 
& 20-40, 40-200, 200-40, 40-1 
& 39.24 
& 23619 \\
QT-76 
& 76 U3-CU3 gate layers 
& 20-40, 40-200, 200-40, 40-1 
& 51.06 
& 25785 \\
QT-95 
& 95 U3-CU3 gate layers 
& 20-40, 40-200, 200-40, 40-1 
& 63.10 
& 27951 \\
\hline
MZeQAS 
& searched QNN 
& 20-40, 40-200, 200-40, 40-1 
& \textbf{63.21} 
& \textbf{25950} \\
\hline
\end{tabular}
\end{table}


\autoref{tab:hybrid_classic_mnist_cifar10} compares MZeQAS with manually designed QT models on MNIST and CIFAR-10. On MNIST, the manually designed QT models improve as the number of U3-CU3 gate layers increases, reaching 91.8\% accuracy with 1497 trainable parameters. In contrast, MZeQAS achieves a higher accuracy of 94.4\% while using only 786 trainable parameters, showing that the searched architecture is both more accurate and more parameter-efficient than all manually designed MNIST QT baselines. On CIFAR-10, the best manually designed QT model achieves 63.10\% accuracy using 27951 trainable parameters. MZeQAS further improves the accuracy to 63.21\% while reducing the number of trainable parameters to 25950. Furthermore, comparing to QT-76, MZeQAS shows more than 10\% accuracy improvement while using approximately similar number of parameter. These results demonstrate that MZeQAS can identify more effective and compact circuit than manually designed QT models across both datasets.

\begingroup
\renewcommand{\thefootnote}{*}
\footnotetext{QT and CNN results are taken from the public repository: \url{https://github.com/Hon-Hai-Quantum-Computing/QuantumTrain/tree/chenyu_dev}.}
\endgroup

\section{Conclusion}
This paper introduces MZeQAS, a novel framework for quantum neural architecture search that efficiently identifies high-performing quantum circuits for classification tasks. The framework integrates a zero-shot proxy of the loss function into a MCTS search strategy, enabling effective evaluation of candidate circuits without costly iterative training. To achieve this, we propose AMES, a zero-shot evaluation function for parameterised quantum circuits. AMES is inspired by the minimum eigenvalue of the Gram matrix in overparameterised QNNs and is augmented with a coefficient proportional to the number of circuit parameters, thereby encouraging the exploration of highly expressive circuit architectures. Using AMES as the evaluation metric, we develop an MCTS-based search algorithm enhanced by batch sampling to further improve scalability. Experimental results on the QPR and image classification datasets demonstrate that MZeQAS achieves superior accuracy compared to traditional quantum architecture search methods while requiring significantly less search time. The hybrid-classic architecture shows that our framework can be extended beyond These results highlight the effectiveness of incorporating zero-shot evaluation and structured search in quantum neural architecture design. For future work, we plan to extend our framework to handle larger-scale QNNs and explore its adaptation for Variational Quantum Eigensolver tasks.

\bibliographystyle{unsrtnat}
\bibliography{example_paper}

\cleardoublepage

\newpage
\appendix
\onecolumn

\section{Additional numerical results}

\begin{table}[h]
\centering
\small
\caption{Comparison between GA with weight sharing and AMES on the MNIST datasets.}
\label{tab:surrogate_function}
\renewcommand{\arraystretch}{1.3} 
\begin{tabular}{p{2.5cm}|c|c|c}
\hline
\textbf{Method} & \textbf{MNIST-2} & \textbf{MNIST-4} & \textbf{MNIST-10} \\ \hline
{GA (QuantumNAS)} & 84.0 & 70.5 & 51.0 \\ \hline
{GA (QuantumNAS) w. AMES} & 84.7 & 76.3 & 51.4 \\ \hline
{MCTS w. AMES (MZeQAS)} & \cellcolor{gray!30}\textbf{85.6} & \cellcolor{gray!30}\textbf{77.3} & \cellcolor{gray!30}\textbf{53.4} \\ \hline
\end{tabular}
\end{table}
\paragraph{MCTS versus Genetic Algorithm} 
Table~\ref{tab:surrogate_function} highlights the performance advantage of our MCTS-based approach over traditional Genetic Algorithm (GA). MCTS leverages a tree-based search strategy with an explicit exploration–exploitation trade-off, guided by AMES, enabling more focused discovery of high-performing architectures with fewer evaluations.  Furthermore, MCTS incrementally builds knowledge about the search space, reusing information from previous circuit candidates to guide subsequent decisions. As shown in Table~\ref{tab:surrogate_function}, MCTS achieves consistently higher accuracy across all MNIST datasets (MNIST-2, MNIST-4, and MNIST-10) compared to GA, demonstrating its superior ability to identify high-quality circuit architectures. Furthermore, using AMES instead of weight-sharing provides a more reliable performance estimation during the search, thereby reducing the dependency on full retraining and improving GA's overall robustness.

\begin{table}[h]
\centering
\caption{Experimental result of MZeQAS when run on ibmq\_quito}
\label{tab:ibm_res}
\begin{tabular}{lcc}
\hline
\textbf{Dataset} & \textbf{MZeQAS} & \textbf{MZeQAS (IBM Topology)} \\
\hline
MNIST-2  & 85.6 & 83.4 \\
MNIST-4  & 77.3 & 78.3 \\
\hline
\end{tabular}
\end{table}

\paragraph{Evaluation on IBM topology} In this experiment, we evaluate the performance of MZeQAS using the topology of IBM’s 5-qubit quantum device (ibmq\_quito). The connectivity of ibmq\_quito is visualised in Figure~\ref{fig:ibm}. During the expansion phase, any selected two-qubit gates must satisfy the device’s connectivity constraints before being added. As shown in \autoref{tab:ibm_res}, MZeQAS maintains performance comparable to the original configuration even under a different topology. The accuracy difference is minimal, with only 2.2 percentage points for MNIST-2, while MNIST-4 even improves slightly under the IBM topology.

\begin{figure}[H]
    \centering

    \begin{subfigure}{\textwidth}
        \centering
        \includegraphics[width=\linewidth]{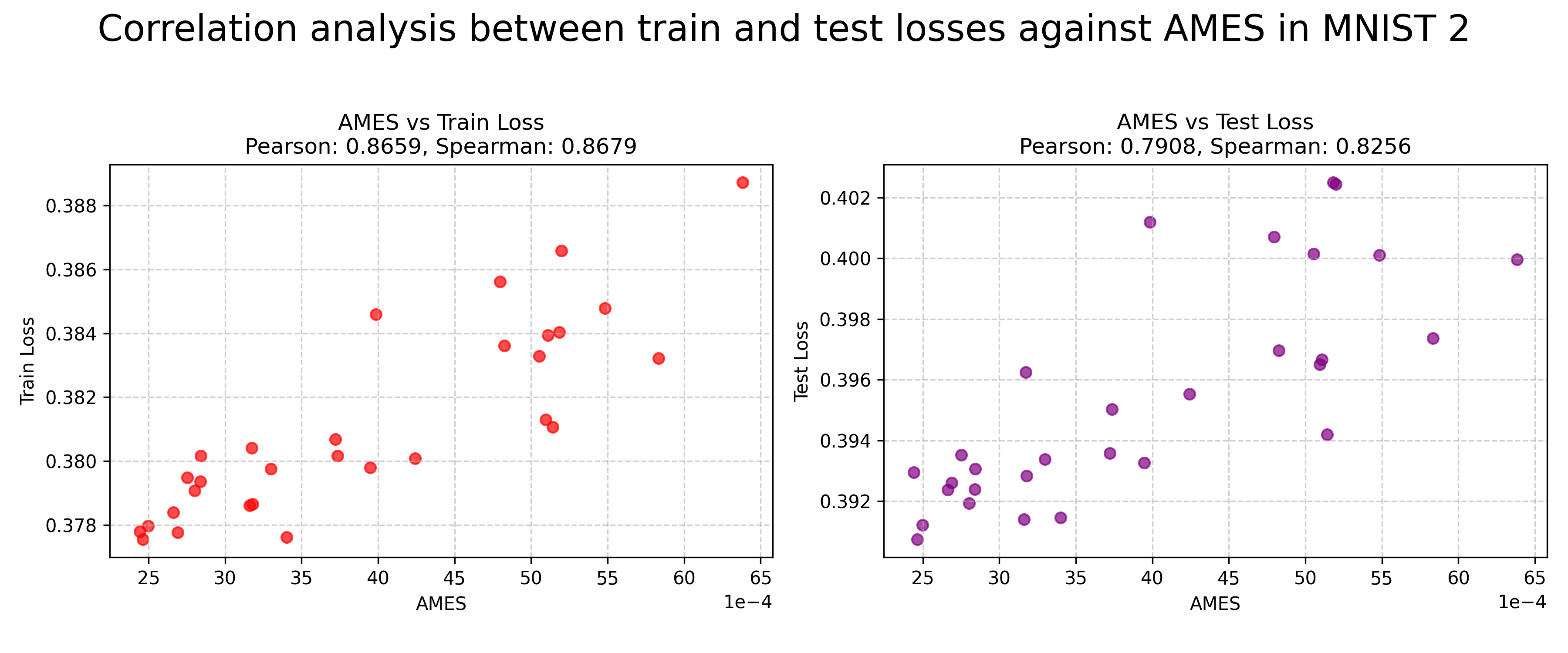}
        \caption{Correlation of AMES with circuits having 40 - 100 randomly generated gates}
    \end{subfigure}
    \hfill
    \begin{subfigure}{\textwidth}
        \centering
        \includegraphics[width=\linewidth]{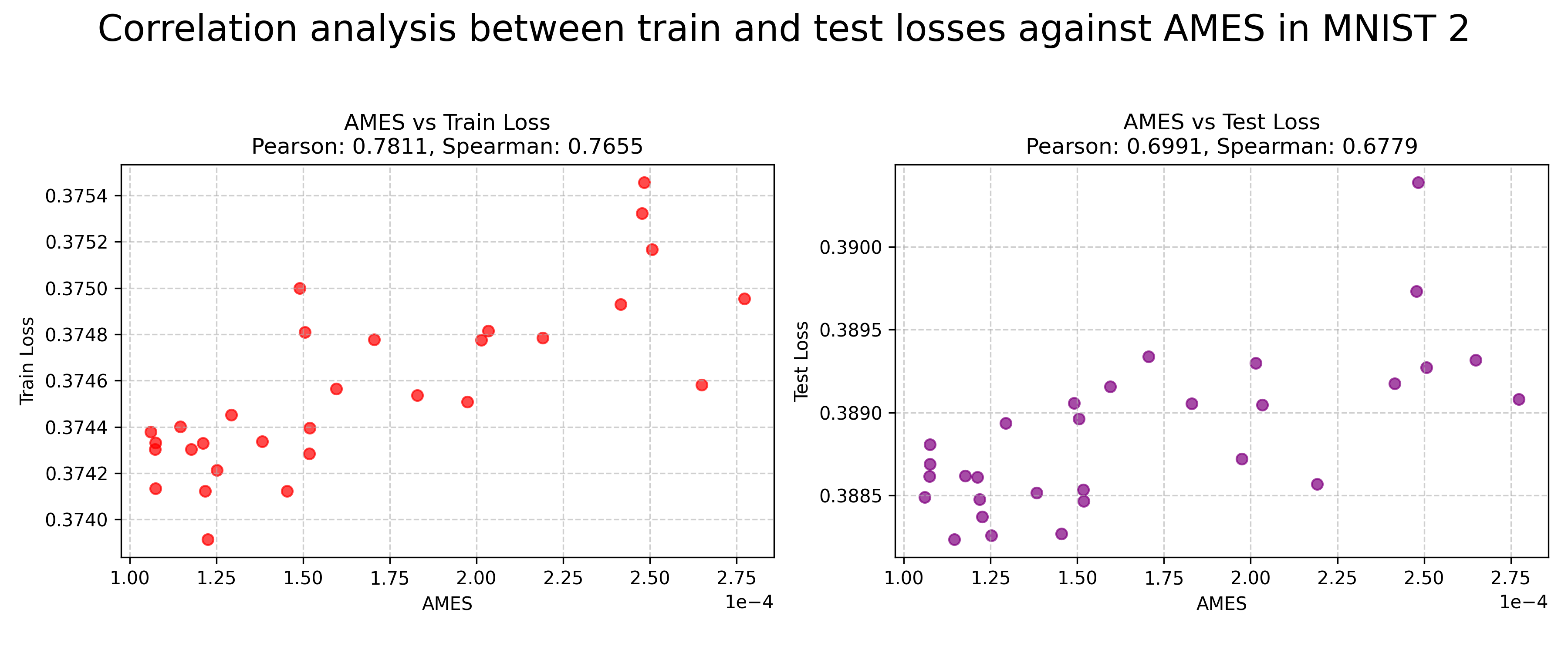}
        \caption{Correlation of AMES with circuits having 300 - 1000 randomly generated gates}
    \end{subfigure}

    \caption{Correlation analysis between train and test losses against AMES in MNIST 2 classes}
    \label{fig:ames_mnist2}
\end{figure}

\paragraph{Analysis of the importance of the parameter coefficient} Table~\ref{tab:pc_impact} compares the accuracy of MZeQAS with and without incorporating $p_c$ in the AMES surrogate function across three datasets 
(MNIST-2, MNIST-4, and MNIST-10). The results show that including $p_c$ consistently improves accuracy on all datasets. Specifically, accuracy increases from 80.0\% to 85.6\% on MNIST-2, from 58.3\% to 77.3\% on MNIST-4, and from 41.3\% to 53.4\% on MNIST-10. These improvements demonstrate that accounting for the number of parameters, $p_c$, in the surrogate function effectively rewards larger circuits and leads to better-performing architectures. This behavior can be attributed to how the surrogate cost is defined: because $1/p_c$ decreases as $p_c$ increases, circuits with more parameters are assigned a lower zero-shot proxy cost.  However, this observation also highlights an important trade-off: while encouraging larger models can boost accuracy, it may lead to increased circuit depth and resource consumption on quantum hardware. 

\begin{table*}[h]
\centering
\caption{Comparison of MZeQAS accuracy on MNIST datasets with and without $p_c$ in AMES.}
\begin{tabular}{l|c|c|c}
\hline
\textbf{Method} & \textbf{MNIST-2} & \textbf{MNIST-4} & \textbf{MNIST-10} \\ \hline
MZeQAS w/o $p_c$ & 80.0 & 58.3 & 41.3 \\ \hline
MZeQAS w $p_c$ & \cellcolor{gray!30}\textbf{85.6} & \cellcolor{gray!30}\textbf{77.3} & \cellcolor{gray!30}\textbf{53.4} \\ \hline
\end{tabular}
\label{tab:pc_impact}
\end{table*}

\paragraph{Correlation analysis AMES and downstream task loss}

Figure~\ref{fig:ames_mnist2} measures the AMES value and the corresponding train and test losses after training the circuit for 100 iterations. It shows that AMES is positively correlated with both training and testing losses on the MNIST-2 classification task. As the AMES value increases, both the train and test losses consistently increase across different circuit depths. For circuits with 40 - 100 randomly generated gates, the correlations are particularly strong, with Pearson coefficients of 0.8659 for train loss and 0.7908 for test loss. A similar positive relationship is observed for deeper circuits containing 300 - 1000 gates, where the Pearson coefficients remain significantly positive at 0.7811 and 0.6991 for train and test losses, respectively. These results indicate that larger AMES values are associated with poorer performance, indicating that AMES is an informative surrogate for QAS task.

\section{Technical proof}
\label{appendix:proof}
\begin{definition}[Periodic ansatz]
\label{appendix: periodic ansatz}
Let $H \in \mathbb{C}^{d \times d}$ be a fixed nonzero traceless Hermitian operator. 
A periodic ansatz with $p$ parameters is defined by
\begin{equation}
U(\vt{\theta}) := U_p \exp{(-i \theta_p \vt{H})} \cdots U_1 \exp{(-i \theta_1 \vt{H})} \, U_0,
\end{equation}
where $\vt{\theta} = (\theta_1,\ldots,\theta_p)$ and the unitaries  $\{U_i\}_{i=0}^p \subset SU(d)$ are independently drawn according to the Haar measure on $SU(d)$.
\end{definition}

Proof for \autoref{theorem: t=inf}: 
\begin{proof}
Let $U(\boldsymbol{\theta}) := U_p\exp(-i\theta_p \vt{H})...U_1\exp(-i\theta_1 \vt{H})U_0$ be the periodic ansatz as defined in \cite{you2023analyzing} (restated in Definition~\ref{appendix: periodic ansatz}) and the parameterized measurement is $M(\vt{\theta}) = U^\dagger(\vt{\theta})M_0U(\vt{\theta})$. Using Taylor expansion on $\exp(-i\theta_j \vt{H})$, we have
\begin{equation}
    \exp(-i\theta_j\vt{H}) \approx I - i\theta_j\vt{H} O(\theta^2). 
\end{equation}
Subsequently, 
\begin{equation}
    U(\vt{\theta}) = (U_pU_{p-1}...U_0) - i\sum_{j=1}^{p}(U_p...U_{j+1})\vt{H}(U_j...U_0)\theta_j + O(\theta^2).
\end{equation}
Let $V:=U_pU_{p-1}...U_0$ and $A_j:=- i(U_p...U_{j+1})\vt{H}(U_j...U_0)$. The above equation becomes $V + \sum_{j=1}^{p}A_j\theta_j + O(\theta^2)$. Similarly, $U^\dagger = V^\dagger + \sum_{j=1}^{p}A_j^\dagger\theta_j + O(\theta^2) $. Using these result, we can decompose $M(\vt{\theta})$ into sum of random variables as follows:
\begin{equation}
    M(\vt{\theta}) \approx V^\dagger M_0 V + \sum_{j=1}^{p} \theta_j (V^\dagger M_0 A_j + A^\dagger_j M_0 V) + O(\theta^2).
\end{equation}
Thus, for every density matrix $\rho$
\begin{equation}
    [M(\vt{\theta}), \rho] \approx [V^\dagger M_0 V, \rho] + \sum_{j=1}^{p} \theta_j [(V^\dagger M_0 A_j + A^\dagger_j M_0 V), \rho] + O(\theta^3).
\end{equation}
Since $\theta \sim \mathcal{N}(0, \vt{I}/p)$ then the term $O(\theta ^ 3)$ vanishes as $p$ tends to infinity. Let $G_{j,k}:=[(V^\dagger M_0 A_j + A^\dagger_j M_0 V), \rho_k]$ and $G_{0,k}:=[V^\dagger M_0 V, \rho_k ]$, then $[M(\vt{\theta}), \rho_k] = G_{0,k} + \sum_{j=1}^p \theta_j G_{j,k}$. Now, we will derive the value of $(K_{\mathrm{asym}}(t))_{kj}$ defined in section \ref{sec:zero-shot}:
\begin{equation}
    (K_{\mathrm{asym}}(t))_{kj} \approx -\tr{G_{0,k}G_{0, j}} - \sum_{l=1}^p \theta_l \tr{G_{l, k}G_{0, j} + G_{0, k}G_{l, j}} - \sum_{l,m=1}^{p} \theta_l \theta_m\tr{ G_{l,k} G _{m, j}}.
\end{equation}
Denote $L_{k,j}(\vt{\theta}):=-\sum_{l=1}^p \theta_l \tr{G_{l, k}G_{0, j} + G_{0, k}G_{l, j}}$ and $W_{k,j}(\vt{\theta}):=- \sum_{l,m=1}^{p} \theta_l \theta_m\tr{ G_{l,k} G _{m, j}}$. The expectation value of each entry in $L$ is 
\begin{equation}
\label{eq:limL}
\mathbf{E}(L_{k,j}(\vt{\theta})) = -\sum_{l=1}^p \tr{G_{l, k}G_{0, j} + G_{0, k}G_{l, j}} \mathbf{E}(\theta_l) = 0.
\end{equation}
Meanwhile, we can rewrite $W(\theta) = \vt{\theta}^\intercal B \vt{\theta}$. Using the Hanson-Wright inequality, 
\begin{equation}
    \Pr(W - \mathbf{E}(W) \geq t) \leq 2 \exp \left(-c \min \left(\frac{p^2 t^2}{\| B\|^2_F}, \frac{pt}{\| B\|_{op}} \right) \right)
\end{equation}
Hence, $\Pr(W - \mathbf{E}(W) \geq t)  \leq 2 \exp(-cp).$ Therefore
\begin{equation}
    \label{eq:limW}
    \lim_{p\to\infty}|W-\mathbb{E}(W)| \to 0.
\end{equation} As a result, using the result from Eq.~\ref{eq:limL} and Eq.~\ref{eq:limW}, each term in $(K_{\mathrm{asym}}(t))_{kj}$ is either remain constant or approaching its expected value as $p$ tends to infinity. Therefore, we have 

 \begin{equation}
    \lim_{p\to\infty}|(K_{\mathrm{asym}}(t))_{kj} - 
    (\mathbf{K}_{\mathrm{asym}}^{\infty})_{kl})| \to 0.
\end{equation}
\end{proof}

\begin{corollary}
\label{cor:lambda-equal}
Under the entry-wise convergence of Gram matrix in \autoref{theorem: t=inf} then 
$$ 
\lim_{p \to \infty} \| \lambda_{0}(t) - \lambda_{\min} \| = 0.  
$$  where $\lambda_0(t)$ and $\lambda_{min}$ are the minimum eigenvalue of $\mathbf{K}_{\mathrm{asym}}(t)$ and $\mathbf{K}_{\mathrm{asym}}^{\infty}$ respectively.
\end{corollary}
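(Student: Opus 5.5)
The plan is to reduce the corollary to continuity of the minimum eigenvalue for symmetric matrices, via Weyl's inequality. Write $n:=|\mathcal{D}|$ for the fixed number of training points. Then $\mathbf{K}_{\mathrm{asym}}(t)$ and $\mathbf{K}_{\mathrm{asym}}^{\infty}$ are $n\times n$ matrices, and $n$ does not depend on $p$.

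\textbf{Step 1: both matrices are real symmetric.} Each entry has the form $\operatorname{tr}(AB)$ with $A=i[M,\rho_k]$ and $B=i[M,\rho_j]$. Since $M$ and $\rho_k$ are Hermitian, the commutator $[M,\rho_k]$ is anti-Hermitian, so $i[M,\rho_k]$ is Hermitian. For Hermitian $A,B$ the quantity $\operatorname{tr}(AB)$ is real and symmetric in $(k,j)$. So $\mathbf{K}_{\mathrm{asym}}(t)$ is a real symmetric (indeed Gram, hence PSD) matrix. Taking expectations over $\boldsymbol{\theta}\sim\mathcal{N}(\mathbf{0},\vt{I}/p)$ preserves both properties, so $\mathbf{K}_{\mathrm{asym}}^{\infty}$ is real symmetric as well. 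In particular both matrices have real spectra, and $\lambda_0(t)$ and $\lambda_{\min}$ are well defined.

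\textbf{Step 2: Weyl's inequality.} For real symmetric $A,B$ we have
\begin{equation*}
|\lambda_{\min}(A)-\lambda_{\min}(B)|\le \|A-B\|_{op}\le \|A-B\|_F\le n\max_{k,j}|A_{kj}-B_{kj}|.
\end{equation*}
This follows from the Courant--Fischer characterisation $\lambda_{\min}(A)=\min_{\|x\|=1}x^\top A x$ together with $|x^\top(A-B)x|\le\|A-B\|_{op}$.

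\textbf{Step 3: pass to the limit.} Apply Step 2 with $A=\mathbf{K}_{\mathrm{asym}}(t)$ and $B=\mathbf{K}_{\mathrm{asym}}^{\infty}$. By Theorem~\ref{theorem: t=inf}, each of the $n^2$ entries of the difference tends to $0$ as $p\to\infty$. Because $n$ is fixed, the maximum over finitely many entries also tends to $0$. Hence $\|\lambda_0(t)-\lambda_{\min}\|\le n\max_{k,j}|(\mathbf{K}_{\mathrm{asym}}(t)-\mathbf{K}_{\mathrm{asym}}^{\infty})_{kj}|\to 0$.

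\textbf{Main obstacle: the mode of convergence.} The proof of Theorem~\ref{theorem: t=inf} gives convergence in probability, from a Hanson--Wright tail bound, rather than deterministic convergence. To handle this, I would restate Step 3 on the event where all $n^2$ entries are within $\varepsilon/n$ of their limits. A union bound over the $n^2$ entries shows this event has probability at least $1-2n^2e^{-cp}$. On that event, Weyl's inequality gives $|\lambda_0(t)-\lambda_{\min}|\le\varepsilon$. Since $1-2n^2e^{-cp}\to1$ as $p\to\infty$ with $n$ fixed, this yields the corollary in probability, which is the sense in which the theorem's limit holds. No uniformity in $t$ beyond what the theorem supplies is needed, because $t$ is held fixed throughout.
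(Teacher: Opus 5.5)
Your proposal is correct and follows essentially the same route as the paper: entrywise convergence gives Frobenius-norm and hence spectral-norm convergence of $\mathbf{K}_{\mathrm{asym}}(t)-\mathbf{K}_{\mathrm{asym}}^{\infty}$, and Weyl's inequality for symmetric matrices then transfers this to the minimum eigenvalues. Your additions are refinements the paper's proof leaves out: you check explicitly that both matrices are real symmetric, you note that the Frobenius step needs the matrix size $n=|\mathcal{D}|$ to stay fixed as $p\to\infty$, and you use a union bound to handle convergence in probability.
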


\begin{proof}
In \autoref{theorem: t=inf}, we showed that the Gram matrix converges entrywise, i.e.  
$$\lim_{p \to \infty} \big| ( \mathbf{K}_{\mathrm{asym}}(t) )_{kl} - ( \mathbf{K}_{\mathrm{asym}}^{\infty} )_{kl} \big| = 0 \quad \forall k,l.$$
This entrywise convergence implies that the difference matrix $\mathbf{K}_{\mathrm{asym}}(t) - \mathbf{K}_{\mathrm{asym}}^{\infty}$ converges to zero in Frobenius norm. Therefore, its spectral norm satisfies  
$$
\| \mathbf{K}_{\mathrm{asym}}(t) - \mathbf{K}_{\mathrm{asym}}^{\infty} \|_2 \to 0.  
$$ 
Since both matrices are symmetric, we use Weyl’s inequality, which yields  
$$
\big| \lambda_{0}(t) - \lambda_{\min} \big| \le \| \mathbf{K}_{\mathrm{asym}}(t) - \mathbf{K}_{\mathrm{asym}}^{\infty} \|_2.  
$$
Consequently,  
$$ 
\lim_{p \to \infty} \| \lambda_{0}(t) - \lambda_{\min} \| = 0.  
$$  
\end{proof}

\section{Experiment setup}
\label{sec: expsetup}

\begin{table*}[h]
\centering
\scriptsize
\caption{Dataset and encoding description}
\label{tab:dataset}
\begin{tabular}{cccclc}
\hline
\textbf{Task} & \textbf{Classes} & \textbf{Input Size} & \textbf{\#qubit} & \textbf{Encoder (Mapping)} \\ \hline
MNIST-10 & 0--9 & 6$\times$6 & 10 & 10RY, 10RZ, 10RX, 6RY \\ \hline
MNIST-4  & 0,1,2,3 & 4$\times$4 & 4 & 4RY, 4RZ, 4RX, 4RY \\ \hline
MNIST-2  & 3,6 & 4$\times$4 & 4 & 4RY, 4RZ, 4RX, 4RY \\ \hline
Fashion-4 & t-shirt, trouser, pullover, dress & 4$\times$4 & 4 & 4RY, 4RZ, 4RX, 4RY \\ \hline
Fashion-2 & dress, shirt & 4$\times$4 & 4 & 4RY, 4RZ, 4RX, 4RY \\ \hline
Vowel-4   & hid, hId, hEd, hAd & 10 & 4 & 4RY, 4RZ, 2RX \\ \hline
\end{tabular}
\end{table*}

\begin{table*}[ht]
\centering
\scriptsize
\caption{Hyperparameters for MCTS and circuit training}
\begin{tabular}{l|l|c|c}
\toprule
\textbf{Category} & \textbf{Hyperparameter} & \textbf{Symbol} & \textbf{Value} \\
\midrule
\multirow{5}{*}{\textbf{Searching Config}} 
    & MCTS's maximum evaluation     & $FE$       & 1000  \\
    & MCTS's widening coefficient   & $\alpha$   & 0.5   \\
    & MCTS's initial decay rate     & $\beta_0$& 0.1   \\
    & Maximum tree depth                  & $D$        & 16    \\
    & Number of children for each node    & $B$        & 4 \\
    & U3, CU3 choosing probabilities      & $P$      & [0.5, 0.5] \\
    & XX, YY, ZZ choosing probabilities   & $P$      & [0.333, 0.333, 0.333] \\
\midrule
\multirow{2}{*}{\textbf{Circuit training Settings}}
    & Learning rate                       & $\eta$     & 0.005 \\
    & Training epochs                     & $T$        & 200   \\
\bottomrule
\end{tabular}
\label{tab:hyperparameters}
\end{table*}

\begin{table}
    \centering
    \caption{Hyperparameter settings for QuantumNAS, PWMCTS and EQNAS}
    \label{tab:placeholder}
    \begin{tabular}{l|l|c}
        Algorithm & Parameter & Value \\ \hline
        \multirow{6}{*}{QuantumNAS} 
        & Number of generation & 40 \\
        & Population size & 40 \\
        & Parent population size&  10 \\
        & Mutation population size & 20 \\
        & Mutation rate & 0.4 \\
        & Crossover population size & 10 \\ \hline
        \multirow{13}{*}{PWMCTS} 
            & Roll-out steps                 & 0   \\
            & Action-by-action              & 5\% \\
            & UCB coefficient & 0.4 \\
            & Progressive widening coefficient & 1   \\
            & Progressive widening exponent   & 0.3 \\
            & Adding probability             & 0.5 \\
            & Swapping probability           & 0.2 \\
            & Changing probability           & 0.2 \\
            & Deleting probability           & 0.1 \\
            & Maximum circuit depth  & 20  \\ \hline
        \multirow{2}{*}{EQNAS} 
        & Number of generation & 20\\
        & Mutation probability& 0.2 \\ \hline
    \end{tabular}

\end{table}

\paragraph{Image classification}
We evaluate the performance of MZeQAS on the image classification dataset MNIST \cite{deng2012mnist}, Fashion-MNIST \cite{xiao2017fashion} and Vowel \cite{vowel} dataset. A breakdown of dataset and their corresponding label, input size and encoder is provided in Table \ref{tab:dataset}. Specifically, for MNIST and Fashion-MNIST, we sample 5000 images as training set, 3000 as validation set and 300 as the test set. In the MNIST and Fashion-MNIST, we use average pooling to reduce the original image to the desired number of dimension, equivalent to the number of qubits as denoted in Table \ref{tab:dataset}. We then use these vector as rotations angle for encoding gates in Table \ref{tab:dataset}. For the Vowel \cite{vowel} dataset, Principal Component Analysis (PCA) was utilized to identify and extract the 10 most salient features. Pauli-Z was selected as the measurement basis. In binary classification scenarios, the measurement outcomes from qubits 1 and 2 were aggregated, as were those from qubits 3 and 4, before being fed into a Softmax function to produce the corresponding class probabilities. MZeQAS is compared against QuantumNAS \cite{wang2022quantumnas}, PWMCTS \cite{lipardi2025quantum} and EQNAS \cite{li2023eqnas}. We implemented MZeQAS using the TorchQuantum library \cite{wang2022quantumnas}. The comparison with EQNAS uses two-qubit gates: \textit{XX}, \textit{YY}, and \textit{ZZ}. In EQNAS, We maintain the same data-splitting strategy as previous experiment while using the encoding approach from \cite{li2023eqnas}. Every founded circuit is trained using Adam optimizer and cosine learning rate scheduler. In order to reduce the computing time, during the search, we randomly sample 100 images and use them to construct the Gram matrix. 
\begin{figure}[h]
    \centering
    \includegraphics[width=0.4\columnwidth]{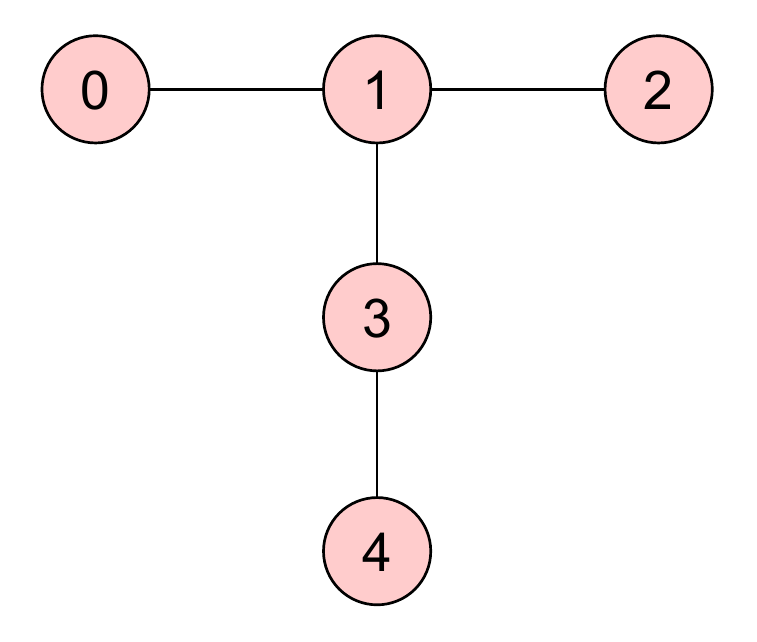}
    \caption{The topology IBM's 5 qubits device ibmq\_quito}
    \label{fig:ibm}
\end{figure}

\paragraph{Quantum Phase Recognition} In this task, the search space is $U3$, $RY$, $RZ$ and $CNOT$. Moreover, we also initialize the MCTS root with 1 layer of QCNN. For each QPR task, we generate 1000 and 100 samples as training and test datasets respectively. We use Adam optimizer with learning rate 0.001 and train for 2000 steps. We perform softmax on the measurement of qubit 1, 3, 5, 7 as prediction for 4 class classification. In order to reduce the computing time, during the search, we randomly sample 10 instances and use them to construct the Gram matrix. Furthermore, 

\paragraph{Hybrid classic architecture search task}
The Quantum-Train (QT) framework \cite{liu2025quantum}, uses a QNN as a compact generator of the weights of a classical neural network, rather than using the QNN directly for inference. Given a classical model with $\tau$ parameters, QT constructs an $d=\lceil \log_2 \tau\rceil$-qubit parameterized quantum circuit whose measurement probabilities are mapped, through a small classical mapping network, into the full parameter vector of the target classical model. During training, both the QNN parameters and mapping-network parameters are optimized using the task loss, while the classical network processes the actual input data and produces predictions. This design reduces the number of trainable parameters from $\tau$ to $O(\mathrm{polylog}(\tau))$, avoids difficult large-scale quantum data encoding, and allows the final trained model to run entirely on classical hardware at inference time. In our work, we use MZeQAS to search for effective QNN architectures within the QT framework. In our experiments, the MZeQAS search space consists of $U3$ and $CU3$ quantum gates. For MNIST, we evaluate the models using 60000 training images and 10000 test images, and MZeQAS is searched from scratch. For CIFAR-10, we evaluate the models using 50000 training images and 10000 test images. Since the CIFAR-10 search space is larger and the optimization is more challenging, we initialize MZeQAS from the QT-76 architecture as a warm start. During the warm-start search, MZeQAS is allowed to prune and modify the original circuit, so the final architecture is not constrained to preserve the full QT-76 structure or its 25785 trainable parameters. As a result, the searched CIFAR-10 architecture can achieve a different parameter count while retaining only the components that are beneficial for performance.

The computational experiments were conducted on three devices: a server with a 96-core Intel(R) Xeon(R) CPU @ 2.00GHz and 335GB RAM, a machine with an Intel(R) i7-10750H(R) CPU @ 2.60GHz and 8GB RAM, and a GPU device with an NVIDIA RTX 5070 GPU with 16GB VRAM.


\section{Pseudocode of MZeQAS}
Algorithm~\ref{alg: main} depicts the pseudocode of MZeQAS's main workflow. This is a more detailed description of the framework in figure~\ref{fig:mcts}. AMES is primarily use in Simulation step.
\begin{algorithm}
\caption{MZeQAS}
\small
\label{alg: main}
\begin{algorithmic}[1]
\State \textbf{Input:} Max tree depth: $D$, Gate selecting probability: $P$, Gate set: $S$, Maximum evaluation: $FE$, Fitness function: $g(.)$, Initial UCB's initial decay rate: $\beta_0$, Number of children for each node: $B$
\State \textbf{Initialize:} Root node $r$ with empty circuit
\State Set $q_{\min} \gets \infty$, $q_{\max} \gets -\infty$
\State $fe\gets0$
\While{$fe$ $< FE$}
    \State // \textbf{Selection}
    \State $c \gets r$
    \While{$c$ is not a leaf and its depth is less than $D$}
        \State Compute $\beta$ using equation~\ref{eq: decay}
        \State Select child $\hat{c} \in \text{children}(c)$ that maximizes equation~\ref{eq:uct}
        \State $c \gets \hat{c}$
        \State // \textbf{Progressive Widening}
        \If{$\lfloor N(c)^\alpha \rfloor > |\text{children}(c)|$}
        \If{Random() $< 0.5$}
            \State Randomly delete gate(s) from $c$ to create $c'$
        \EndIf
        \Else
            \State Randomly add a gate from S to $c$ following the probability distribution $P$ to create $c'$
        \EndIf
        \State Run simulation and backpropagation for $c'$
    \EndWhile

    \State // \textbf{Expansion}
    \For{$i = 1$ to $B$}
        \State Sample $H \sim \mathcal{U}(1, L)$
        \State Sample $H$ gates from $S$ following the probability distribution $P$, denote as $c_{new}$
        \State $\hat{c} \gets$ $c \cup c_{new}$
        \State Add $\hat{c}$ to children($c$)
    \EndFor

    \State // \textbf{Simulation}
    \For{$\hat{c} \in children(c)$}
        \State Compute fitness $g(\hat{c})$ using AMES
        \State $Q(\hat{c}) \gets -g(\hat{c})$
        \State $N(\hat{c}) \gets 1$
        \State $fe \gets fe + 1$
        \State $q_{max} \gets max(q_{max}, Q(\hat{c})), q_{min}\gets min(q_{min}, Q(\hat{c}))$
    \EndFor
    
    \State // \textbf{Backpropagation}
    \While{$c$ is not the root node $r$}
        \State $Q(c) \gets \max\limits_{\hat{c} \in \text{children}(c)} Q(\hat{c})$
        \State $N(c) \gets \sum\limits_{\hat{c} \in \text{children}(c)} N(\hat{c})$
        \State $c \gets Father(c)$
    \EndWhile
\EndWhile
\State \textbf{Output:} Circuit $c^*$ with minimum $g(c^*)$
\end{algorithmic}
\end{algorithm}

\section{Limitation discussion and Outlook}
\label{sec:disscusion}
In this work, we presented MZeQAS, a zero-shot quantum architecture search framework that leverages the convergence behavior of the Gram matrix in overparameterized QNNs to construct a training-free surrogate objective. By combining the proposed AMES surrogate with MCTS, MZeQAS is able to efficiently explore large quantum circuit search spaces while avoiding the computational overhead associated with repeated circuit training. For fair comparison, the image-classification experiments in this paper adopt a fixed data encoding strategy across all search algorithms and focus exclusively on optimizing the variational circuit architecture. However, the choice of data encoding also influences the expressivity, trainability, and inductive bias of QNNs. In particular, different encoding schemes may interact with circuit architectures in non-trivial ways, potentially affecting the optimization landscape and generalization performance. Consequently, an important direction for future work is to extend MZeQAS toward a joint optimization framework that simultaneously searches for both the quantum circuit architecture and the feature-encoding strategy. Such an approach could enable the discovery of task-specific end-to-end quantum models that are better aligned with the underlying data distribution. Another limitation is that the proposed surrogate is primarily designed for supervised QNN pipelines trained using gradient-based optimization. Future work may therefore investigate extending the surrogate framework to other important quantum computing paradigms, including the Variational Quantum Eigensolver (VQE) and the Quantum Approximate Optimization Algorithm (QAOA). In these settings, the optimization objectives differ substantially from supervised learning losses. Developing zero-shot proxies for these algorithms could significantly reduce the cost of architecture discovery in quantum chemistry and combinatorial optimization applications. 

\end{document}